\documentclass[11pt]{article}
\usepackage[a4paper]{geometry}
\usepackage{amsfonts, amsmath, amssymb, amsthm, graphicx, caption, authblk, multirow, makecell, framed, float, xcolor, enumitem, tikz, hyperref}
\setlength{\parskip}{0in}
\setlength{\parindent}{0.3in}
\setlength{\topmargin}{-0.2in}
\setlength{\textheight}{8.7in}
\setlength{\oddsidemargin}{0.2in}
\setlength{\evensidemargin}{0.2in}
\setlength{\textwidth}{6in}

\theoremstyle{definition}

\newtheorem{lemma}{Lemma}

\theoremstyle{remark}

\definecolor{blk}{RGB}{63,63,63}
\newcommand*{\mybox}[1]{%
  \framebox{\raisebox{0cm}[0.5\baselineskip][0.05\baselineskip]{%
    \hbox to 0.10cm {\hss#1\hss}}}\hspace{0.05cm}}

\begin{document}
\title{Physical ZKP for Makaro Using a Standard Deck of Cards}
\author[1]{Suthee Ruangwises\thanks{\texttt{ruangwises@gmail.com}}}
\author[1]{Toshiya Itoh\thanks{\texttt{titoh@c.titech.ac.jp}}}
\affil[1]{Department of Mathematical and Computing Science, Tokyo Institute of Technology, Tokyo, Japan}
\date{}
\maketitle

\begin{abstract}
Makaro is a logic puzzle with an objective to fill numbers into a rectangular grid to satisfy certain conditions. In 2018, Bultel et al. developed a physical zero-knowledge proof (ZKP) protocol for Makaro using a deck of cards, which allows a prover to physically convince a verifier that he/she knows a solution of the puzzle without revealing it. However, their protocol requires several identical copies of some cards, making it impractical as a deck of playing cards found in everyday life typically consists of all different cards. In this paper, we propose a new ZKP protocol for Makaro that can be implemented using a standard deck (a deck consisting of all different cards). Our protocol also uses asymptotically less cards than the protocol of Bultel et al. Most importantly, we develop a general method to encode a number with a sequence of all different cards. This allows us to securely compute several numerical functions using a standard deck, such as verifying that two given numbers are different and verifying that a number is the largest one among the given numbers.

\textbf{Keywords:} zero-knowledge proof, card-based cryptography, Makaro, puzzle
\end{abstract}

\section{Introduction}
\textit{Makaro} is a logic puzzle created by Nikoli, a company that developed many famous logic puzzles including Sudoku and Kakuro. A Makaro puzzle consists of a rectangular grid of white and black cells. White cells are divided into polyominoes called \textit{rooms}, with some cells already containing a number, while each black cell contain an arrow pointing to some direction. The objective of this puzzle is to fill a number into each empty white cell according to the following rules \cite{nikoli}.
\begin{enumerate}
	\item \textit{Room condition}: Each room must contain consecutive numbers starting from 1 to its \textit{size} (the number of cells in the room).
	\item \textit{Neighbor condition}: Two (horizontally or vertically) adjacent cells in different rooms must contain different numbers.
	\item \textit{Arrow condition}: Each arrow in a black cell must point to the only largest number among the (up to) four numbers in the white cells adjacent to that black cell. See Fig. \ref{fig1}.
\end{enumerate}

\begin{figure}
\centering
\begin{tikzpicture}
\filldraw[draw=blk,fill=blk] (2.4,0) rectangle (3.2,0.8);
\filldraw[draw=blk,fill=blk] (0.8,1.6) rectangle (1.6,2.4);
\filldraw[draw=blk,fill=blk] (3.2,1.6) rectangle (4,2.4);
\filldraw[draw=blk,fill=blk] (1.6,2.4) rectangle (2.4,3.2);
\filldraw[draw=blk,fill=blk] (1.6,3.2) rectangle (2.4,4);

\draw[step=0.8cm,color={rgb:black,1;white,4}] (0,0) grid (4,4);

\draw[line width=0.6mm] (0,0) -- (0,4);
\draw[line width=0.6mm] (0.8,0) -- (0.8,4);
\draw[line width=0.6mm] (1.6,0.8) -- (1.6,4);
\draw[line width=0.6mm] (2.4,0) -- (2.4,0.8);
\draw[line width=0.6mm] (2.4,1.6) -- (2.4,4);
\draw[line width=0.6mm] (3.2,0) -- (3.2,0.8);
\draw[line width=0.6mm] (3.2,1.6) -- (3.2,2.4);
\draw[line width=0.6mm] (4,0) -- (4,4);
\draw[line width=0.6mm] (0,0) -- (4,0);
\draw[line width=0.6mm] (1.6,0.8) -- (3.2,0.8);
\draw[line width=0.6mm] (0,1.6) -- (1.6,1.6);
\draw[line width=0.6mm] (2.4,1.6) -- (4,1.6);
\draw[line width=0.6mm] (0.8,2.4) -- (2.4,2.4);
\draw[line width=0.6mm] (3.2,2.4) -- (4,2.4);
\draw[line width=0.6mm] (1.6,3.2) -- (2.4,3.2);
\draw[line width=0.6mm] (0,4) -- (4,4);

\node at (3.6,0.4) {1};
\node at (0.4,2.8) {3};
\node at (3.6,3.6) {2};

\node at (2.8,0.4) {\textcolor{white}{\boldmath$\Uparrow$}};
\node at (1.2,2) {\textcolor{white}{\boldmath$\Downarrow$}};
\node at (3.6,2) {\textcolor{white}{\boldmath$\Downarrow$}};
\node at (2,2.8) {\textcolor{white}{\boldmath$\Rightarrow$}};
\node at (2,3.6) {\textcolor{white}{\boldmath$\Leftarrow$}};
\end{tikzpicture}
\hspace{1.5cm}
\begin{tikzpicture}
\filldraw[draw=blk,fill=blk] (2.4,0) rectangle (3.2,0.8);
\filldraw[draw=blk,fill=blk] (0.8,1.6) rectangle (1.6,2.4);
\filldraw[draw=blk,fill=blk] (3.2,1.6) rectangle (4,2.4);
\filldraw[draw=blk,fill=blk] (1.6,2.4) rectangle (2.4,3.2);
\filldraw[draw=blk,fill=blk] (1.6,3.2) rectangle (2.4,4);

\draw[step=0.8cm,color={rgb:black,1;white,4}] (0,0) grid (4,4);

\draw[line width=0.6mm] (0,0) -- (0,4);
\draw[line width=0.6mm] (0.8,0) -- (0.8,4);
\draw[line width=0.6mm] (1.6,0.8) -- (1.6,4);
\draw[line width=0.6mm] (2.4,0) -- (2.4,0.8);
\draw[line width=0.6mm] (2.4,1.6) -- (2.4,4);
\draw[line width=0.6mm] (3.2,0) -- (3.2,0.8);
\draw[line width=0.6mm] (3.2,1.6) -- (3.2,2.4);
\draw[line width=0.6mm] (4,0) -- (4,4);
\draw[line width=0.6mm] (0,0) -- (4,0);
\draw[line width=0.6mm] (1.6,0.8) -- (3.2,0.8);
\draw[line width=0.6mm] (0,1.6) -- (1.6,1.6);
\draw[line width=0.6mm] (2.4,1.6) -- (4,1.6);
\draw[line width=0.6mm] (0.8,2.4) -- (2.4,2.4);
\draw[line width=0.6mm] (3.2,2.4) -- (4,2.4);
\draw[line width=0.6mm] (1.6,3.2) -- (2.4,3.2);
\draw[line width=0.6mm] (0,4) -- (4,4);

\node at (0.4,0.4) {2};
\node at (1.2,0.4) {1};
\node at (2,0.4) {2};
\node at (3.6,0.4) {1};
\node at (0.4,1.2) {1};
\node at (1.2,1.2) {3};
\node at (2,1.2) {4};
\node at (2.8,1.2) {3};
\node at (3.6,1.2) {5};
\node at (0.4,2) {2};
\node at (2,2) {2};
\node at (2.8,2) {4};
\node at (0.4,2.8) {3};
\node at (1.2,2.8) {1};
\node at (2.8,2.8) {5};
\node at (3.6,2.8) {3};
\node at (0.4,3.6) {1};
\node at (1.2,3.6) {2};
\node at (2.8,3.6) {1};
\node at (3.6,3.6) {2};

\node at (2.8,0.4) {\textcolor{white}{\boldmath$\Uparrow$}};
\node at (1.2,2) {\textcolor{white}{\boldmath$\Downarrow$}};
\node at (3.6,2) {\textcolor{white}{\boldmath$\Downarrow$}};
\node at (2,2.8) {\textcolor{white}{\boldmath$\Rightarrow$}};
\node at (2,3.6) {\textcolor{white}{\boldmath$\Leftarrow$}};
\end{tikzpicture}
\caption{An example of a Makaro puzzle (left) and its solution (right)}
\label{fig1}
\end{figure}

Determining whether a given Makaro puzzle has a solution has been proved to be NP-complete \cite{np}.

Suppose that Amber created a difficult Makaro puzzle and challenged her friend Bennett to solve it. After a while, Bennett could not solve her puzzle and began to doubt whether the puzzle has a solution. Amber needs to convince him that her puzzle actually has a solution without revealing it to him. In this situation, Amber needs a \textit{zero-knowledge proof (ZKP)}.

\subsection{Zero-Knowledge Proof}
The concept of a ZKP was first introduced by Goldwasser et al. \cite{zkp0}. A ZKP is an interactive proof between $P$ and $V$ where both of them are given a computational problem $x$, but only $P$ knows a solution $w$ of $x$. A ZKP with perfect completeness and perfect soundness must satisfy the following three properties.

\begin{enumerate}
	\item \textbf{Perfect Completeness:} If $P$ knows $w$, then $V$ always accepts.
	\item \textbf{Perfect Soundness:} If $P$ does not know $w$, then $V$ always rejects.
	\item \textbf{Zero-knowledge:} $V$ learns nothing about $w$. Formally, there exists a probabilistic polynomial time algorithm $S$ (called a \textit{simulator}) that does not know $w$ but has access to $V$, and the outputs of $S$ follow the same probability distribution as the ones from the actual protocol.
\end{enumerate}

Many recent results have been focusing on constructing physical ZKPs using objects found in everyday life such as a deck of cards and envelopes. These physical protocols have benefits that they do not require computers and also allow external observers to verify that the prover truthfully executes the protocol (which is often a challenging task for digital protocols). They are also suitable for teaching purpose and can be used to teach the concept of a ZKP to non-experts.

\subsection{Related Work}
\subsubsection{Protocol of Bultel et al.}
In 2018, Bultel et al. \cite{makaro} developed the first card-based ZKP protocol for Makaro. Their protocol uses $\Theta(nk)$ cards, where $n$ and $k$ are the number of white cells and the size of the largest room, respectively. However, it requires $\Theta(nk)$ identical copies of a specific card (and also $\Theta(n)$ identical copies of another card).

As a deck of playing cards found in everyday life typically consists of all different cards, $\Theta(nk)$ identical decks are actually required to implement this protocol, making the protocol very impractical. Another option is to use a different kind of deck (e.g. cards from board games) that contains several identical copies of some cards, but these decks are more difficult to find in everyday life.

\subsubsection{Other Protocols}
Besides Makaro, card-based ZKP protocols for many other logic puzzles have also been developed: Sudoku \cite{sudoku0,sudoku}, Akari \cite{akari}, Takuzu \cite{akari,takuzu}, Kakuro \cite{akari,kakuro}, KenKen \cite{akari}, Norinori \cite{norinori}, Slitherlink \cite{slitherlink}, Juosan \cite{takuzu}, Numberlink \cite{numberlink}, Suguru \cite{suguru}, Ripple Effect \cite{ripple}, Nurikabe \cite{nurikabe}, Hitori \cite{nurikabe}, Bridges \cite{bridges}, Masyu \cite{slitherlink}, Nonogram \cite{nonogram2}, Heyawake \cite{nurikabe}, and Shikaku \cite{shikaku}. All of these protocols, however, require a deck with repeated cards.

An open problem to develop ZKP protocols for logic puzzles using a standard deck (a deck consisting of all different cards) was posed by Koyama et al. \cite{standard4}. This problem was recently answered by Ruangwises \cite{sudoku2}, who developed a ZKP protocol for Sudoku using a standard deck, the first standard deck protocol for any kind of logic puzzle. However, the protocol in \cite{sudoku2} was specifically designed to tackle only the rules of Sudoku and cannot be applied to verify other numerical functions or other logic puzzles, thus having limited utility.

Other than logic puzzles, card-based protocols have also been widely studied in secure multi-party computation, a setting where multiple parties want to jointly compute a function of their secret inputs without revealing them. Almost all of existing protocols, however, also use a deck with repeated cards. The only exceptions are \cite{standard3,standard4,standardyao,standard2,standard1} which proposed AND, XOR, copy, and Yao's millionaire protocols using a standard deck.

\subsection{Our Contribution}
Considering the drawback of the protocol of Bultel et al. \cite{makaro}, we aim to develop a more practical ZKP protocol for Makaro that can be implemented using a standard deck.\footnote{Although a ``standard deck'' of playing cards found in everyday life typically consists of 52 different cards, in theory we study a general setting where the deck is arbitrarily large, consisting of all different cards.}

In this paper, we propose a new ZKP protocol for Makaro with perfect completeness and soundness using a standard deck. It is also the second standard deck protocol for any logic puzzle, after the one for Sudoku \cite{sudoku2}. Remarkably, our protocol uses asymptotically less cards than the protocol of Bultel et al. (see Table \ref{table1}). This is a noteworthy achievement as card-based protocols that use a standard deck generally require more cards than their counterparts that use a deck with repeated cards \cite{anydeck}. (In particular, the standard deck protocol for Sudoku \cite{sudoku2} also requires more cards than its counterpart \cite{sudoku}.)

\begin{table}
	\centering
	\begin{tabular}{|c|c|c|c|}
		\hline
		\textbf{Protocol} & \textbf{\thead{Standard\\ Deck?}} & \textbf{\#Cards} \\ \hline
		\textbf{Bultel et al. \cite{makaro}} & no & $\Theta(nk)$ \\ \hline
		\textbf{Ours} & yes & $\Theta(n+k)$ \\ \hline
	\end{tabular}
	\medskip
	\caption{The number of required cards for each protocol for Makaro, where $n$ and $k$ are the number of white cells and the size of the largest room, respectively} \label{table1}
\end{table}

Most importantly, we develop a general method to encode a number with a sequence of all different cards. This allows us to securely compute several numerical functions using a standard deck, such as verifying that two given numbers are different and verifying that a number is the largest one among the given numbers.

\section{Preliminaries}
Let $n$ be the number of white cells and $k$ be the size of the largest room in the Makaro grid.

We assume that all cards used in our protocols have different front sides and identical back sides. For didactic purpose, cards are divided into \textit{sets}. Cards in the same set are denoted by the same letter with different indices, e.g. cards $a_1,a_2,a_3,a_4$ are in the same set.

In an $\ell \times m$ \textit{matrix} of cards, let Row $i$ denote the $i$-th topmost row, and Column $j$ denote the $j$-th leftmost column.

\subsection{Pile-Shifting Shuffle}
Given an $\ell \times m$ matrix of cards, a \textit{pile-shifting shuffle} \cite{polygon} rearranges the columns of the matrix by a random cyclic shift unknown to all parties. It can be implemented in real world by putting the cards in each column into an envelope and then taking turns to apply \textit{Hindu cuts} (taking several envelopes from the bottom and putting them on the top) to the pile of envelopes \cite{hindu}.

\subsection{Pile-Scramble Shuffle}
Given an $\ell \times m$ matrix of cards, a \textit{pile-scramble shuffle} \cite{scramble} rearranges the columns of the matrix by a random permutation unknown to all parties. It can be implemented in real world by putting the cards in each column into an envelope and then jointly scrambling the envelopes together randomly.

\section{Main Protocol}
\subsection{Cell Cards}
We use a \textit{cell card} to represent each white cell in the grid. Cells in the same room are represented by cards in the same set. To avoid confusion, a cell card is always denoted by a Greek letter followed by an index equal to the number in the cell it represents. We have cell cards in sets $\alpha_i,\beta_i,\gamma_i,...$ and so on. See Fig. \ref{fig4} for an example.\footnote{Assume that we have $\ell$ cards with different numbers, e.g. cards with numbers $1,2,...,\ell$. In the example in Fig. \ref{fig4}, we can, for instance, regard cards $1,2,3$ on cells with numbers $1,2,3$ in the top-left room as $\alpha_1,\alpha_2,\alpha_3$, cards $4,5$ on cells with numbers $1,2$ in the top-center room as $\beta_1,\beta_2$, cards $6,7,8,9,10$ on cells with numbers $1,2,3,4,5$ in the top-right room as $\gamma_1,\gamma_2,\gamma_3,\gamma_4,\gamma_5$, and so on.}

\begin{figure}
\centering
\begin{tikzpicture}
\filldraw[draw=blk,fill=blk] (2.4,0) rectangle (3.2,0.8);
\filldraw[draw=blk,fill=blk] (0.8,1.6) rectangle (1.6,2.4);
\filldraw[draw=blk,fill=blk] (3.2,1.6) rectangle (4,2.4);
\filldraw[draw=blk,fill=blk] (1.6,2.4) rectangle (2.4,3.2);
\filldraw[draw=blk,fill=blk] (1.6,3.2) rectangle (2.4,4);

\draw[step=0.8cm,color={rgb:black,1;white,4}] (0,0) grid (4,4);

\draw[line width=0.6mm] (0,0) -- (0,4);
\draw[line width=0.6mm] (0.8,0) -- (0.8,4);
\draw[line width=0.6mm] (1.6,0.8) -- (1.6,4);
\draw[line width=0.6mm] (2.4,0) -- (2.4,0.8);
\draw[line width=0.6mm] (2.4,1.6) -- (2.4,4);
\draw[line width=0.6mm] (3.2,0) -- (3.2,0.8);
\draw[line width=0.6mm] (3.2,1.6) -- (3.2,2.4);
\draw[line width=0.6mm] (4,0) -- (4,4);
\draw[line width=0.6mm] (0,0) -- (4,0);
\draw[line width=0.6mm] (1.6,0.8) -- (3.2,0.8);
\draw[line width=0.6mm] (0,1.6) -- (1.6,1.6);
\draw[line width=0.6mm] (2.4,1.6) -- (4,1.6);
\draw[line width=0.6mm] (0.8,2.4) -- (2.4,2.4);
\draw[line width=0.6mm] (3.2,2.4) -- (4,2.4);
\draw[line width=0.6mm] (1.6,3.2) -- (2.4,3.2);
\draw[line width=0.6mm] (0,4) -- (4,4);

\node at (0.4,0.4) {$\epsilon_2$};
\node at (1.2,0.4) {$\zeta_1$};
\node at (2,0.4) {$\zeta_2$};
\node at (3.6,0.4) {$\delta_1$};
\node at (0.4,1.2) {$\epsilon_1$};
\node at (1.2,1.2) {$\zeta_3$};
\node at (2,1.2) {$\delta_4$};
\node at (2.8,1.2) {$\delta_3$};
\node at (3.6,1.2) {$\delta_5$};
\node at (0.4,2) {$\alpha_2$};
\node at (2,2) {$\delta_2$};
\node at (2.8,2) {$\gamma_4$};
\node at (0.4,2.8) {$\alpha_3$};
\node at (1.2,2.8) {$\beta_1$};
\node at (2.8,2.8) {$\gamma_5$};
\node at (3.6,2.8) {$\gamma_3$};
\node at (0.4,3.6) {$\alpha_1$};
\node at (1.2,3.6) {$\beta_2$};
\node at (2.8,3.6) {$\gamma_1$};
\node at (3.6,3.6) {$\gamma_2$};

\node at (2.8,0.4) {\textcolor{white}{\boldmath$\Uparrow$}};
\node at (1.2,2) {\textcolor{white}{\boldmath$\Downarrow$}};
\node at (3.6,2) {\textcolor{white}{\boldmath$\Downarrow$}};
\node at (2,2.8) {\textcolor{white}{\boldmath$\Rightarrow$}};
\node at (2,3.6) {\textcolor{white}{\boldmath$\Leftarrow$}};
\end{tikzpicture}
\caption{A cell card representing each white cell in the solution of the puzzle in Fig. \ref{fig1}}
\label{fig4}
\end{figure}

At the beginning, $P$ publicly places a face-down corresponding cell card on each white cell already having a number. Then, $P$ secretly places a face-down corresponding cell card according to his/her solution on each empty white cell.

\subsection{Verifying Room Condition} \label{room}
Consider a room $R$ of size $p$ in the Makaro grid containing cells represented by cell cards $\alpha_1,\alpha_2,...,\alpha_p$. This subprotocol allows $P$ to show that the cell cards in $R$ consist of a permutation of $\alpha_1,\alpha_2,...,\alpha_p$ without revealing their order. It was developed by Sasaki et al. \cite{sudoku}.

Besides cell cards, we also use \textit{helping cards} $h_i$ ($i=1,2,...,k$) in our protocol.

\begin{figure}
\centering
\begin{tikzpicture}
\node at (0.0,1.4) {\mybox{?}};
\node at (0.6,1.4) {\mybox{?}};
\node at (1.2,1.4) {...};
\node at (1.8,1.4) {\mybox{?}};

\node at (0.05,1) {$\alpha_?$};
\node at (0.65,1) {$\alpha_?$};
\node at (1.85,1) {$\alpha_?$};

\node at (0.0,0.4) {\mybox{?}};
\node at (0.6,0.4) {\mybox{?}};
\node at (1.2,0.4) {...};
\node at (1.8,0.4) {\mybox{?}};

\node at (0.05,0) {$h_1$};
\node at (0.65,0) {$h_2$};
\node at (1.85,0) {$h_p$};
\end{tikzpicture}
\caption{A $2 \times p$ matrix constructed in Step 2}
\label{fig5}
\end{figure}

\begin{enumerate}
	\item Take all cell cards in $R$ in any specific order (e.g. from top to bottom, then from left to right) and place them face-down in Row 1 of a matrix $M$.
	\item Publicly place face-down helping cards $h_1,h_2,...,h_p$ in Row 2 of $M$ in this order from left to right. $M$ is now a $2 \times p$ matrix (see Fig. \ref{fig5}).
	\item Apply the pile-scramble shuffle to $M$.
	\item Turn over all cards in Row 1 of $M$. If the sequence is a permutation of $\alpha_1,\alpha_2,...,\alpha_p$, proceed to the next step; otherwise, $V$ rejects.
	\item Turn over all face-up cards in $M$. Apply the pile-scramble shuffle to $M$ again.
	\item Turn over all cards in Row 2 of $M$. Arrange the columns of $M$ such that the cards in Row 2 are $h_1,h_2,...,h_p$ in this order from left to right. Note that the columns of $M$ are now reverted to their original order.
	\item Take the cards in Row 1 of $M$ and place them back into room $R$ in the same order we take them in Step 1.
\end{enumerate}

$P$ applies this subprotocol for every room in the Makaro grid to verify the room condition.

However, verifying the neighbor condition and arrow condition is more difficult and cannot be done by using cell cards alone. Therefore, we have to develop a method to encode a number with a sequence of all different cards.

\subsection{Encoding Sequences}
In previous ZKP protocols for other logic puzzles \cite{makaro,suguru,numberlink,ripple,bridges}, a number $x$ ($1 \leq x \leq m$) is often encoded by a sequence $E_m(x)$ of $m$ consecutive cards, with all of them being \mybox{$\clubsuit$}s except the $x$-th leftmost card being a \mybox{$\heartsuit$} (e.g. $E_4(2)$ is \mybox{$\clubsuit$}\mybox{$\heartsuit$}\mybox{$\clubsuit$}\mybox{$\clubsuit$}). We will employ that idea to develop an encoding sequence for a number $x$ using all different cards.

Besides cell cards and helping cards, we also use \textit{encoding cards} $a_i,b_i,c_i,d_i$ ($i=1,2,...,2k-1$) in our protocol. (We need four sets of encoding cards because we later have to compare up to four numbers at the same time during the arrow condition verification.)

For a fixed integer $m \leq 2k-1$, define a sequence $E_m^a(x)$ to be a sequence of $m$ consecutive cards, where the $x$-th leftmost card is $a_1$, and the other $m-1$ cards are a uniformly random permutation of $a_2,a_3,...,a_m$ unknown to $V$.

The role of the card $a_1$ in $E_m^a(x)$ is to mark the value of $x$, similarly to a \mybox{$\heartsuit$} in $E_m(x)$. Note that the order of $a_2,a_3,...,a_m$ must be unknown to $V$ in order for the protocol to be zero-knowledge, so each encoding sequence is for one-time use only.

We also define sequences $E_m^b(x)$, $E_m^c(x)$, and $E_m^d(x)$ analogously, using encoding cards from sets $b_i$, $c_i$, and $d_i$, with cards $b_1$, $c_1$, and $d_1$ as marking points, respectively.

\subsection{Conversion from Cell Cards to Encoding Sequences} \label{conversion}
This is the most crucial subprotocol in our protocol. Let $w$ be any white cell represented by a cell card $\alpha_x$. Suppose that $w$ is located in a room $R$ with size $p$. This subprotocol allows $P$ to construct an encoding sequence $E_m^a(x)$ for some fixed $m \geq p$ without revealing the value $x$ to $V$, while leaving all cell cards in $R$ unchanged.

\begin{figure}
\centering
\begin{tikzpicture}
\node at (0.0,2.4) {\mybox{?}};
\node at (0.6,2.4) {\mybox{?}};
\node at (1.2,2.4) {...};
\node at (1.8,2.4) {\mybox{?}};
\node at (2.4,2.4) {\mybox{?}};
\node at (3.0,2.4) {\mybox{?}};
\node at (3.6,2.4) {...};
\node at (4.2,2.4) {\mybox{?}};

\node at (0.0,2) {$\alpha_?$};
\node at (0.6,2) {$\alpha_?$};
\node at (1.8,2) {$\alpha_?$};
\node at (2.4,2) {$\alpha_x$};
\node at (3.0,2) {$\alpha_?$};
\node at (4.2,2) {$\alpha_?$};

\node at (0.0,1.4) {\mybox{?}};
\node at (0.6,1.4) {\mybox{?}};
\node at (1.2,1.4) {...};
\node at (1.8,1.4) {\mybox{?}};
\node at (2.4,1.4) {\mybox{?}};
\node at (3.0,1.4) {\mybox{?}};
\node at (3.6,1.4) {...};
\node at (4.2,1.4) {\mybox{?}};

\node at (0.0,1) {$h_1$};
\node at (0.6,1) {$h_2$};
\node at (1.8,1) {$h_{i-1}$};
\node at (2.4,1) {$h_i$};
\node at (3.0,1) {$h_{i+1}$};
\node at (4.2,1) {$h_p$};

\node at (0.0,0.4) {\mybox{?}};
\node at (0.6,0.4) {\mybox{?}};
\node at (1.2,0.4) {...};
\node at (1.8,0.4) {\mybox{?}};
\node at (2.4,0.4) {\mybox{?}};
\node at (3.0,0.4) {\mybox{?}};
\node at (3.6,0.4) {...};
\node at (4.2,0.4) {\mybox{?}};

\node at (0.0,0) {$a_?$};
\node at (0.6,0) {$a_?$};
\node at (1.8,0) {$a_?$};
\node at (2.4,0) {$a_1$};
\node at (3.0,0) {$a_?$};
\node at (4.2,0) {$a_?$};
\end{tikzpicture}
\caption{A $3 \times k$ matrix $M$ constructed in Step 5}
\label{fig6}
\end{figure}

\begin{enumerate}
	\item Take all cell cards in $R$ in any specific order (e.g. from top to bottom, then from left to right) and place them face-down in Row 1 of a matrix $M$. Suppose the card $\alpha_x$ is located at Column $i$ of $M$.
	\item Publicly place face-down helping cards $h_1,h_2,...,h_p$ in Row 2 of $M$ in this order from left to right.
	\item Publicly place face-down encoding card $a_1$ in Row 3 of $M$ at Column $i$.
	\item Secretly arrange face-down encoding cards $a_2,a_3,...,a_m$ in a uniformly random permutation unknown to $V$. Refer to this sequence as $S$.
	\item Take the $p-1$ leftmost cards of $S$ and place them in empty cells in Row 3 of $M$ in this order from left to right. Leave the $m-p$ rightmost cards of $S$ unchanged. $M$ is now a complete $3 \times p$ matrix (see Fig. \ref{fig6}).
	\item Apply the pile-scramble shuffle to $M$.
	\item Turn over all cards in Row 1 of $M$. Arrange the columns of $M$ such that the cards in Row 1 are $\alpha_1,\alpha_2,...,\alpha_p$ in this order from left to right.
	\item Take all cards in Row 3 of $M$ out of the matrix ($M$ now becomes a $2 \times p$ matrix). Refer to the sequence taken from Row 3 of $M$ as $T$. Append the $m-p$ rightmost cards of $S$ left in Step 5 to the right of $T$. The appended sequence is $E_m^a(x)$ as desired.
	\item Turn over all face-up cards in $M$. Apply the pile-scramble shuffle to $M$ again.
	\item Turn over all cards in Row 2 of $M$. Arrange the columns of $M$ such that the cards in Row 2 are $h_1,h_2,...,h_p$ in this order from left to right. Note that the columns of $M$ are now reverted to their original order.
	\item Take the cards in Row 1 of $M$ and place them back into room $R$ in the same order we take them in Step 1.
\end{enumerate}

\subsection{Verifying Neighbor Condition} \label{neighbor}
This subprotocol allows $P$ to show that two adjacent cells represented by $\alpha_x$ and $\beta_y$ in different rooms contain different numbers. The idea of this subprotocol is exactly the same as the one developed by Bultel et al. \cite[\S3.3 Step 2]{makaro} to verify the same condition, except that it uses encoding sequences $E_m^a(x)$ and $E_m^b(y)$ instead of $E_m(x)$ and $E_m(y)$.

Let $p$ and $q$ be the sizes of rooms containing $\alpha_x$ and $\beta_y$, respectively, and let $m=\max(p,q)$. Note that we have $m \leq k$. First, $P$ applies the conversion protocol in section \ref{conversion} to construct sequences $E_m^a(x)$ and $E_m^b(y)$ from $\alpha_x$ and $\beta_y$, respectively. Then, $P$ performs the following steps.

\begin{enumerate}
	\item Construct a $2 \times m$ matrix $M$ by placing $E_m^a(x)$ and $E_m^b(y)$ in Row 1 and Row 2, respectively.
	\item Apply the pile-scramble shuffle to $M$.
	\item Turn over all cards in Row 1 of $M$. Suppose $a_1$ is located at Column $i$.
	\item Turn over a card in Row 2 of $M$ at Column $i$. If it is not $b_1$, proceed to the next step; otherwise, $V$ rejects.
\end{enumerate}

$P$ applies this subprotocol for every pair of adjacent cells that are in different rooms in the Makaro grid to verify the neighbor condition.

\subsection{Verifying Arrow Condition} \label{arrow}
Suppose the (up to) four cells adjacent to a black cell containing an arrow are represented by $\alpha_x$, $\beta_y$, $\gamma_z$, and $\delta_t$, with an arrow pointing to $\alpha_x$.\footnote{Some of the cells may be in the same room, but this does not affect the conversion as we apply the conversion protocol to each cell card one by one.} This subprotocol allows $P$ to show that a number in the cell represented by $\alpha_x$ is the largest one among all numbers in these cells. The idea of this subprotocol is exactly the same as the one developed by Bultel et al. \cite[\S3.3 Step 3]{makaro} to verify the same condition, except that it uses encoding sequences $E_{2m-1}^a(x)$, $E_{2m-1}^b(y)$, $E_{2m-1}^c(z)$, and $E_{2m-1}^d(t)$ instead of $E_{2m-1}(x)$, $E_{2m-1}(y)$, $E_{2m-1}(z)$, and $E_{2m-1}(t)$.

Let $p$, $q$, $r$, and $s$ be the sizes of rooms containing $\alpha_x$, $\beta_y$, $\gamma_z$, and $\delta_t$, respectively, and let $m=\max(p,q,r,s)$. Note that we have $m \leq k$, and thus $2m-1 \leq 2k-1$. First, $P$ applies the conversion protocol in section \ref{conversion} to construct sequences $E_{2m-1}^a(x)$, $E_{2m-1}^b(y)$, $E_{2m-1}^c(z)$, and $E_{2m-1}^d(t)$ from $\alpha_x$, $\beta_y$, $\gamma_z$, and $\delta_t$, respectively. Then, $P$ performs the following steps.

\begin{enumerate}
	\item Construct a $4 \times (2m-1)$ matrix $M$ by placing $E_{2m-1}^a(x)$, $E_{2m-1}^b(y)$, $E_{2m-1}^c(z)$, and $E_{2m-1}^d(t)$ in Rows 1, 2, 3, and 4, respectively.
	\item Apply the pile-shifting shuffle to $M$.
	\item Turn over all cards in Row 1 of $M$. Suppose $a_1$ is located at Column $i$.
	\item Turn over cards in Rows 2, 3, and 4 of $M$ at Columns $i$, $i+1$, ..., $i+m-1$ (where the indices are taken modulo $2m-1$). If none of them is $b_1$, $c_1$, or $d_1$, proceed to the next step; otherwise, $V$ rejects.
\end{enumerate}

$P$ applies this subprotocol for every arrow in the Makaro grid to verify the arrow condition.

If the verification passes for all three conditions, then $V$ accepts.

\subsection{Complexity}
Our protocol uses $n$ cell cards, $k$ helping cards, and $4(2k-1)$ encoding cards, resulting in the total of $n+9k-4 = \Theta(n+k)$ cards. In comparison, the protocol of Bultel et al. \cite{makaro} requires $\Theta(nk)$ cards.

\section{Proof of Correctness and Security}
We will prove the perfect completeness, perfect soundness, and zero-knowledge properties of our protocol.

\begin{lemma}[Perfect Completeness] \label{lem1}
If $P$ knows a solution of the Makaro puzzle, then $V$ always accepts.
\end{lemma}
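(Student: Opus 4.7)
The plan is to walk through each of the three verification subprotocols in order, checking that when $P$'s cell-card placement encodes a valid Makaro solution every comparison performed by $V$ succeeds deterministically; a preliminary step is to verify that the conversion subprotocol produces genuine $E_m^a(x)$ encodings, since both the neighbor and arrow checks rely on it.

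For the room-condition subprotocol of Section \ref{room}, a valid solution places exactly $\{\alpha_1, \ldots, \alpha_p\}$ in the cells of any room $R$ of size $p$, and the pile-scramble shuffle only permutes columns of $M$, so Row 1 at Step 4 is, as a multiset, still $\{\alpha_1, \ldots, \alpha_p\}$ and the test passes. Steps 5--7 then re-synchronize $M$ via the helping cards and return the cell cards to $R$ unchanged.

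For the conversion subprotocol of Section \ref{conversion}, before the shuffle $a_1$ sits in the same column as $\alpha_x$; because the pile-scramble shuffle preserves column co-location, once Row 1 is publicly rearranged to $\alpha_1, \ldots, \alpha_p$ the card $a_1$ occupies column $x$ of Row 3, and extracting Row 3 and appending the $m-p$ reserved cards from $S$ yields a length-$m$ sequence with $a_1$ at position $x$, i.e., a valid $E_m^a(x)$. Feeding such encodings into the neighbor subprotocol of Section \ref{neighbor}, if $x \neq y$ then $a_1$ and $b_1$ begin in different columns of the $2 \times m$ matrix and remain so after pile-scrambling, so when $a_1$ is revealed at column $i$ the card beneath it cannot be $b_1$, and the test passes.

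The main obstacle, and the only step where genuine modular arithmetic is needed, is the arrow-condition subprotocol of Section \ref{arrow}, which uses length-$(2m-1)$ encodings and a pile-shifting (cyclic) shuffle rather than a pile-scramble. Before the shuffle $a_1, b_1, c_1, d_1$ sit in columns $x, y, z, t \in \{1, \ldots, m\}$. A cyclic shift by $s$ sends column $j$ to $j + s \pmod{2m-1}$, so if $a_1$ is revealed at column $i$ then $s \equiv i - x$ and $b_1$ ends up at offset $y - x \pmod{2m-1}$ from $a_1$. The verifier inspects offsets $\{0, 1, \ldots, m-1\}$, so $b_1$ falls in the window iff $y - x \pmod{2m-1} \in \{0, 1, \ldots, m-1\}$. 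Since the arrow condition gives $y < x$ with $x, y \in \{1, \ldots, m\}$, we have $y - x \in \{-(m-1), \ldots, -1\}$, which modulo $2m-1$ lies in $\{m, \ldots, 2m-2\}$, disjoint from the window; the same argument applies to $c_1$ and $d_1$, so Step 4 passes. Since all three checks succeed, $V$ accepts.
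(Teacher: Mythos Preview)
Your proof is correct and follows essentially the same approach as the paper's: establish that the conversion subprotocol places $a_1$ at position $x$, then argue each of the three verification checks passes. Your treatment of the arrow condition is a bit more explicit than the paper's (you compute the offset $y-x \bmod (2m-1)$ and show it lands in $\{m,\ldots,2m-2\}$, whereas the paper simply asserts that $b_1,c_1,d_1$ lie in columns $i-1,\ldots,i-m+1$), but the content is identical.
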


\begin{proof}
Suppose $P$ knows a solution and places cards on the grid accordingly.

First, we will prove the correctness of the conversion protocol in Section \ref{conversion}. From the way we construct the matrix $M$, in Step 5 the card $a_1$ is in the same column as $\alpha_x$, and the other $p-1$ cards in Row 3 are uniformly distributed among all $\frac{(m-1)!}{(m-p)!}$ permutations of $p-1$ cards selected from $a_2,a_3,...,a_m$. In Step 7, the card $a_1$ is moved to Column $x$. Hence, the appended sequence in Step 8 has $a_1$ as the $x$-th leftmost card, and the other $m-1$ cards are uniformly distributed among all $(m-1)!$ permutations of $a_2,a_3,...,a_m$ (which remains unknown to $V$). Therefore, the appended sequence is indeed $E_m^a(x)$.

Next, we will prove that the verification of all three conditions will pass.

\begin{itemize}
	\item For the room condition verification in Section \ref{room}, the cards that are turned over in Step 4 must be a permutation of $\alpha_1,\alpha_2,...,.\alpha_p$, so the verification will pass.
	\item For the neighbor condition verification in Section \ref{neighbor}, the cell cards are correctly converted to sequences $E_m^a(x)$ and $E_m^b(y)$. Since $x \neq y$, the cards $a_1$ and $b_1$ must be in different columns of $M$. Hence, the card that is turned over in Step 4 cannot be $b_1$, so the verification will pass.
	\item For the arrow condition verification in Section \ref{arrow}, the cell cards are correctly converted to sequences $E_{2m-1}^a(x)$, $E_{2m-1}^b(y)$, $E_{2m-1}^c(z)$, and $E_{2m-1}^d(t)$. Since $x$ is the only largest number among the four numbers, in Step 3 each of the cards $b_1$, $c_1$, and $d_1$ must be in one of Columns $i-1,i-2,...,i-m+1$ of $M$ (where the indices are taken modulo $2m-1$). Hence, the cards that are turned over in Step 4 cannot include $b_1$, $c_1$, or $d_1$, so the verification will pass.
\end{itemize}

Therefore, $V$ always accepts.
\end{proof}

\begin{lemma}[Perfect Soundness] \label{lem2}
If $P$ does not know a solution of the Makaro puzzle, then $V$ always rejects.
\end{lemma}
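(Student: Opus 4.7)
The plan is to prove the contrapositive: whenever $V$ accepts, the cell cards that $P$ committed to the grid form a genuine solution of the puzzle. Because $P$ commits to all cell cards (face-down) at the very start and never gets to alter them covertly, this will be enough.

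First I would argue that after the room condition subprotocol of Section \ref{room} succeeds for every room, the committed cell cards induce a well-defined filling of the grid. Indeed, for each room $R$ of size $p$, Step 4 of that subprotocol turns over Row 1 and checks that the multiset of cards is exactly $\{\alpha_1,\ldots,\alpha_p\}$; since the cards in Row 1 are precisely the cell cards taken from $R$, acceptance forces each room to contain a permutation of $\alpha_1,\ldots,\alpha_p$. Thus each white cell $w$ holds a card $\alpha_x$ with a well-defined value $x\in\{1,\ldots,p\}$, and we may speak of ``the number in cell $w$''.

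Next I would establish that the conversion subprotocol of Section \ref{conversion} is \emph{binding}: starting from a cell card $\alpha_x$ in room $R$, it can only produce $E_m^a(x)$ for the true value $x$, regardless of how $P$ behaves. The key point is that in Step 3 the marker $a_1$ is placed \emph{publicly} in Column $i$ of Row 3, where Column $i$ is the column containing $\alpha_x$ in Row 1. After the pile-scramble in Step 6, the column containing $\alpha_x$ and $a_1$ is some random column, but in Step 7 $V$ himself permutes the columns so that Row 1 reads $\alpha_1,\alpha_2,\ldots,\alpha_p$; this forces $a_1$ to end up in Column $x$. The remaining $m-1$ cards appended in Step 8 are, by construction, a permutation of $a_2,\ldots,a_m$. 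Hence the output sequence is exactly $E_m^a(x)$, and $P$ has no freedom to cheat here. The analogous statement holds for $E_m^b,E_m^c,E_m^d$.

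Given binding, the neighbor and arrow subprotocols become straightforward checks on the true values. For the neighbor subprotocol of Section \ref{neighbor}, after the pile-scramble of Step 2 and revealing $a_1$ at Column $i$ in Step 3, Row 2 at Column $i$ is $b_1$ if and only if $x=y$; so acceptance forces $x\neq y$ for every pair of adjacent cells in different rooms. For the arrow subprotocol of Section \ref{arrow}, after the pile-shifting shuffle and revealing $a_1$ at Column $i$, none of $b_1,c_1,d_1$ appears in Columns $i,i+1,\ldots,i+m-1$ (modulo $2m-1$) if and only if each of $y,z,t$ lies strictly to the ``left'' of $x$ in the cyclic order, which, combined with all values being in $\{1,\ldots,m\}$, is equivalent to $x$ being the unique maximum of $\{x,y,z,t\}$. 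Hence acceptance forces every arrow cell to satisfy the arrow condition.

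Putting these pieces together, acceptance of $V$ implies that the committed cell cards satisfy all three rules, i.e.\ they form a solution of the Makaro puzzle, contradicting the hypothesis that $P$ does not know a solution. The main obstacle in writing this up cleanly is the binding argument for the conversion subprotocol, since one must verify carefully that $P$'s only secret action (the permutation chosen in Step 4) cannot be exploited to shift the position of $a_1$ relative to $\alpha_x$; once that is pinned down, the rest follows by direct combinatorial inspection of the matrix in each subprotocol.
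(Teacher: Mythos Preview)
Your proposal is correct and follows essentially the same approach as the paper: a case analysis on the three Makaro conditions, showing that failure of any one of them forces the corresponding subprotocol to reject. You frame it as the contrapositive and add an explicit ``binding'' argument for the conversion subprotocol of Section~\ref{conversion}; the paper's own proof takes this binding for granted (relying on the correctness analysis from the completeness lemma) and proceeds directly, but the substance is the same.
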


\begin{proof}
Suppose $P$ does not know a solution. At least one of the three conditions must be violated.
\begin{itemize}
	\item If the room condition is violated, consider the room condition verification in Section \ref{room} for a room that violates the condition. The cards that are turned over in Step 4 cannot be a permutation of $\alpha_1,\alpha_2,...,.\alpha_p$, so the verification will fail.
	\item If the neighbor condition is violated, consider the neighbor condition verification in Section \ref{neighbor} for a pair of adjacent cells that that violates the condition. We have $x=y$, so the cards $a_1$ and $b_1$ must be in the same column of $M$. Hence, the card that is turned over in Step 4 must be $b_1$, so the verification will fail.
	\item If the arrow condition is violated, consider the arrow condition verification in Section \ref{arrow} for an arrow that violates the condition. Suppose $y \geq x$. In Step 3, the card $b_1$ must be in one of Columns $i,i+1,...,i+m-1$ of $M$ (where the indices are taken modulo $2m-1$). Hence, the cards that are turned over in Step 4 must include $b_1$, so the verification will fail.
\end{itemize}

Therefore, $V$ always rejects.
\end{proof}

\begin{lemma}[Zero-Knowledge] \label{lem3}
During the verification, $V$ learns nothing about $P$'s solution.
\end{lemma}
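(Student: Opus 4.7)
The plan is to construct, in the standard simulator-paradigm way, a probabilistic polynomial-time algorithm $\mathcal{S}$ that, without access to $P$'s solution, produces a transcript whose distribution exactly matches $V$'s view in a real accepting execution. Since $V$ is purely passive here (the shuffles are joint, and every flip and rearrangement is determined by the protocol rather than by $V$'s choices), it suffices to show that the joint distribution of the cards that ever become face-up depends only on the public puzzle instance, and then have $\mathcal{S}$ sample from that distribution. I would work subprotocol by subprotocol and then glue the pieces together.

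For the room condition verification in Section~\ref{room}, the only reveals are Row~1 in Step~4 and Row~2 in Step~6, each preceded by a pile-scramble shuffle; since pile-scramble permutes columns uniformly, Row~1 is a uniformly random permutation of $\alpha_1,\ldots,\alpha_p$ and Row~2 is an independent uniformly random permutation of $h_1,\ldots,h_p$, which $\mathcal{S}$ samples. For the conversion subprotocol in Section~\ref{conversion}, the reveals in Steps~7 and~10 also follow pile-scramble shuffles, so the same argument applies; importantly, the Row~3 cards (which encode $x$) are never turned face-up, so the random permutation of $a_2,\ldots,a_m$ chosen in Step~4 remains hidden from $V$. For the neighbor verification in Section~\ref{neighbor}, the pile-scramble places $a_1$ at a uniformly random column $i\in\{1,\ldots,m\}$ with the other $a$-cards in a uniformly random order in the remaining columns; conditional on this, completeness gives $x\neq y$, so the card flipped in Row~2 at Column~$i$ is marginally uniform over $\{b_2,\ldots,b_m\}$. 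For the arrow verification in Section~\ref{arrow}, a pile-shifting shuffle applies a uniformly random cyclic shift, so $a_1$ lands at a uniformly random column $i\in\{1,\ldots,2m-1\}$; conditional on $i$, the $m$ cards revealed in each of Rows~2, 3, 4 at Columns $i,i+1,\ldots,i+m-1$ (which, by the arrow condition, avoid the positions of $b_1,c_1,d_1$) are uniformly distributed over injections from those $m$ positions into the $2m-2$ non-marking cards of each row, and $\mathcal{S}$ samples accordingly.

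The main obstacle is the conversion subprotocol, because its output $E_m^a(x)$ is consumed by the neighbor and arrow subprotocols and must be shown to be distributed exactly as a freshly drawn $E_m^a(x)$ would be---otherwise the composed simulator's distribution could drift from the real one. To handle this, I would argue that after Step~8 the card $a_1$ sits at position~$x$ deterministically, while the remaining cards form a uniformly random permutation of $a_2,\ldots,a_m$ conditional on $V$'s view; this holds because the second pile-scramble in Step~9 re-randomises the columns before Row~2 is exposed, so the order of $a_2,\ldots,a_m$ in Row~3 never leaks into $V$'s transcript. Once this invariant is established, the per-subprotocol simulators can be composed: the face-down cards handed off between subprotocols (cell cards back in the grid, encoding sequences held aside) carry hidden uniform randomness that makes the simulators' outputs independent in exactly the same way as in the real protocol, and the resulting $\mathcal{S}$ reproduces $V$'s view with the correct joint distribution.
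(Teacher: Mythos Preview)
Your proposal is correct and follows essentially the same simulator-based decomposition as the paper: argue, subprotocol by subprotocol, that every face-up reveal is uniformly distributed independently of the solution (uniform permutations after pile-scrambles; a uniform card from $\{b_2,\ldots,b_m\}$ in the neighbor check; uniform injections into the non-marking cards in the arrow check). You are in fact more careful than the paper in explicitly treating the composition issue---that the conversion output must be distributed exactly as a fresh $E_m^a(x)$ before it is consumed downstream---which the paper handles only implicitly via the completeness lemma.

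One small misattribution to fix: you credit the uniformity of $a_2,\ldots,a_m$ in the conversion output to the pile-scramble in Step~9, but Row~3 has already been removed in Step~8, so that shuffle cannot touch it. The actual source of this uniformity is $P$'s secret uniform choice of $S$ in Step~4; the Step~9 shuffle instead serves to decorrelate the Row~2 reveal in Step~10 from the Row~1 reveal in Step~7 (without it, $V$ would read off the cell permutation and hence the solution). Your conclusion is right, only the justification needs this adjustment. Also, in the arrow step you should note that \emph{all} of Row~1 is revealed (not just the column of $a_1$), and that the full row is uniform over the $(2m-1)!$ permutations because the random permutation inside $E_{2m-1}^a(x)$ combines with the cyclic shift to give a uniform arrangement.
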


\begin{proof}
It is sufficient to show that all distributions of cards that are turned face-up can be simulated by a simulator $S$ that does not know $P$'s solution.

\begin{itemize}
	\item In the room condition verification in Section \ref{room}:
	\begin{itemize}
		\item In Step 4, the orders of face-up cards are uniformly distributed among all $p!$ permutations of $\alpha_1,\alpha_2,...,\alpha_p$, so it can be simulated by $S$.
		\item In Step 6, the orders of face-up cards are uniformly distributed among all $p!$ permutations of $h_1,h_2,...,h_p$, so it can be simulated by $S$.
	\end{itemize}
	
	\item In the conversion protocol in Section \ref{conversion}:
	\begin{itemize}
		\item In Step 7, the orders of face-up cards are uniformly distributed among all $p!$ permutations of $\alpha_1,\alpha_2,...,\alpha_p$, so it can be simulated by $S$.
		\item In Step 10, the orders of face-up cards are uniformly distributed among all $p!$ permutations of $h_1,h_2,...,h_p$, so it can be simulated by $S$.
	\end{itemize}
	
	\item In the neighbor condition verification in Section \ref{neighbor}:
	\begin{itemize}
		\item In Step 3, the orders of face-up cards are uniformly distributed among all $m!$ permutations of $a_1,a_2,...,a_m$, so it can be simulated by $S$.
		\item In Step 4, the face-up card has an equal probability to be one of $b_2,b_3...,b_m$, so it can be simulated by $S$.
	\end{itemize}
	
	\item In the arrow condition verification in Section \ref{arrow}:
	\begin{itemize}
		\item In Step 3, the orders of face-up cards are uniformly distributed among all $(2m-1)!$ permutations of $a_1,a_2,...,a_{2m-1}$, so it can be simulated by $S$.
		\item In Step 4, the orders of face-up cards in Row 2 are uniformly distributed among all $\frac{(2m-2)!}{(m-2)!}$ permutations of $m$ cards selected from $b_2,b_3,...,b_{2m-1}$. The same goes for face-up cards in Row 3 and Row 4, with $m$ cards selected from $c_2,c_3,...,c_{2m-1}$ and $d_2,d_3,...,d_{2m-1}$, respectively. So, it can be simulated by $S$.
	\end{itemize}
\end{itemize}

Therefore, we can conclude that $V$ learns nothing about $P$'s colusion.
\end{proof}

\section{Future Work}
We developed a ZKP protocol for Makaro using a standard deck, which requires asymptotically less cards than the existing protocol of Bultel et al. \cite{makaro}. We also developed a general method to encode a number with a sequence of all different cards, which allows us to securely compute several numerical functions using a standard deck. This method can be used to verify solutions of some other logic puzzles including Suguru. Possible future work includes developing standard deck protocols to verify solutions of other logic puzzles (e.g. Kakuro, Numberlink), or to compute broader types of functions.

\end{document}